\newtheorem{definition}{Definition}
\newtheorem{proposition}{Proposition}
\newtheorem{assumption}{Assumption}
\newtheorem{thm}{Theorem}
\title{\LARGE \bf Distributed Learning Dynamics Converging to the Core of $B$-Matchings
}
\author{Aya Hamed$^{1}$ and Jeff S. Shamma$^{1}$
\thanks{*This work was supported by the University of Illinois Urbana-Champaign}
\thanks{$^{1}$Aya Hamed and Jeff S. Shamma are with the Department of Industrial and Enterprise Systems Engineering, the Grainger College of Engineering, University of Illinois at Urbana-Champaign, 117 Transportation Building, MC-238, 104 S. Mathews Ave. Urbana, IL 61801-3080, USA {\tt\small ayah2@illinois.edu, jshamma@illinois.edu}}%
}
\begin{document}

\maketitle
\thispagestyle{empty}
\pagestyle{empty}

\begin{abstract}

 $B$-Matching is a special case of matching problems where nodes can join multiple matchings with the degree of each node constrained by an upper bound, the node's $B$-value. The core solution of a bipartite $B$-matching is both a matching between the nodes respecting the upper bound constraint and an allocation of the weights of the edges among the nodes such that no group of nodes can deviate and collectively gain higher allocation. We present two learning dynamics that converge to the core of the bipartite $B$-matching problems. The first dynamics are centralized dynamics in the nature of the Hungarian method, which converge to the core in a polynomial time. The second dynamics are distributed dynamics, which converge to the core with probability one. For the distributed dynamics, a node maintains only a state consisting of (i) the aspiration levels for all of its possible matches and (ii) the matches, if any, to which it belongs. The node does not keep track of its history nor is it aware of the environment state. In each stage, a randomly activated node proposes to form a new match and changes its aspiration based on the success or failure of its proposal. At this stage, the proposing node inquires about the aspiration of the node it wants to match with to calculate the feasibility of the match. The environment matching structure changes whenever a proposal succeeds. A state is absorbing for the distributed dynamics if and only if it is in the core of the $B$-matching. 
\end{abstract}

\section{INTRODUCTION}

Matchings are used widely in multi-agent systems to model setups with pairwise interactions. Edges between agents can describe communication, contracts, or assignments. Matchings have been studied in various forms, connection-wise: one-to-one, many-to-one, and many-to-many, and graph-wise: general and bipartite. Constraints and parameters can also be added to limit the number of connections and the weight/cost of these connections. The different forms of matchings allow us to reduce many applications to matching problems. For example, multi-agent task assignment problems can usually be modeled as one form of matching according to the objectives of the application [1]-[3]. Matching applications also include the widely known kidney exchange problem, the medical residents matching program in the United States, and ride-sharing applications [4]-[5]. 

Various real-life applications can be modeled as many-to-many bipartite matching problems, for instance, online ad placements [6]-[7] and firms and consultants matching [8]. In addition, the authors in [9] modeled data markets where data can be replicated with zero cost as a many-to-many bipartite matching problem. Furthermore, in the networks field, many problems are related to matching through duality, reduction, or extension, such as edge covering, stable set polyhedron, flow, transportation, and transshipment problems [10]. 

The author in [11] presented some key considerations for multi-agent systems. Among these considerations are the autonomy of the multi-agent system, the complexity of the decision-making and learning, the communication between agents, the distribution of the agents, and the system's security and privacy requirements. The distributed dynamics presented in this paper address a number of these considerations since the dynamics are decentralized, require only simple computations from the agents per iteration, and reveal only minimal information about other agents' states. The setup we are considering is the bipartite $B$-matching setup, which allows many-to-many matchings with an upper bound on the number of matches per node and at most one match between each pair of nodes. In our dynamics, an agent is randomly activated which proposes to match with another agent if both of their aspirations can be fulfilled by the value of the match. The proposing agent increases or decreases its aspiration according to the success or failure of the proposal and its current matching state. An agent cannot exceed its upper bound of matches, hence, if it is already at its capacity for matches, it breaks off an old match to join a new one. 

The presented dynamics converge to the core allocation of the $B$-matching problems. The core allocation is an essential solution concept in cooperative games that guarantees that no group of agents can gain more by deviating and forming their own coalition in the game. The core allocation is proven to exist for bipartite $B$-matchings in [12]. We prove that through our distributed dynamics and only pairwise interactions, the dynamics reach a core allocation with probability one.

Our distributed dynamics, the $B$-Matching Proposals dynamics, extend and build upon the dynamics in [13]-[15]. The dynamics in [13] and [14] address the one-to-one bipartite matching setup and the many-to-one bipartite matching setup, respectively. The dynamics in [15] apply to general coalitional games with transferable utility, an example of which is the matching problems on a general graph. Similar work has been carried out in [16] for assignment games and in [17] for superadditive coalitional games. For $B$-matchings, a core allocation can be reached using the dynamics in [15] if it is formulated as a coalitional game, yet it requires the calculations of the maximum $B$-matching value (defined later) for each coalition of agents. Our current proposed dynamics bypass these calculations and use only pairwise interactions to reach the core of the $B$-matching.

The rest of this paper is organized as follows. Section II introduces some preliminaries for our setup. Section III defines the Paths Transfers algorithm and exhibits its convergence proof. Section IV introduces the $B$-Matching Proposals algorithm. Section V discusses the $B$-Matching Proposals dynamics convergence. In Section VI, we illustrate the $B$-Matching Proposals dynamics on a multi-agent task assignment setup. Finally, Section VII concludes the paper.

\section{PRELIMINARIES}

\subsection{Definitions}
Our setup in this paper is the weighted simple $B$-matching presented in [10]. Henceforth, we will refer to it as the nodes $B$-matching. The following nomenclature will be used throughout the paper.

\begin{small}
\begin{align*}
    G&=U\cup V &\; &\text{The grand union of all nodes} \\
    E&=U\times V &\; &\text{Edges between nodes} \\
[B(g)]&=\{1,2,...,B(g)\}&\;  &\text{The $B$-copies identifiers set}\\
g_k & &\quad &\text{The $k$-th copy of $g$ node}\\
\mathcal{U}&=\cup_{u\in U}\{u_i:i\in [B(u)]\}&\; &\text{copies of } U \text{ nodes}\\
\mathcal{V}&=\cup_{v\in V}\{v_j:j\in [B(v)]\}&\; &\text{copies of } V \text{ nodes}\\
\mathcal{G}&=\mathcal{U}\cup \mathcal{V} &\;  &\text{The grand union of all copies}\\
\mathcal{E}&=\mathcal{U}\times \mathcal{V} &\; &\text{Edges between copies} \\
\mathbf{x}&=\{x_g:g\in G\} &\; &\text{Nodes allocation}\\ 
\mathbf{a}_g&=\{a_{g,1},a_{g,2},...,a_{g,B(g)}\} &\; &\text{Copies allocations of node }g\\ 
\end{align*} 

\end{small}

\begin{definition}
    Consider a complete bipartite graph $(U,V,E),$ together with a weighing function $W:E\rightarrow \mathbb{Q}^+,$ and a $B$-value function $B:G \rightarrow \mathbb{Z}^+.$ Let $M\subseteq E$ and the degree of a node within $M$ be $\delta^M: G \rightarrow \mathbb{Z}^+.$ If $\delta^M(g)\leq B(g)$ for all $g\in G.$ Then $M$ is a \textbf{$B$-matching}.
\end{definition}

\begin{definition}
    Set the\textbf{ maximum $B$-matching value} of a graph $(U,V,E)$ and the associated $W$ and $B$ functions to be $\max_{M \in\mathfrak{B}}  \sum_{(u,v)\in M}W(u,v),$ where $\mathfrak{B}$ is the set of $B$-matchings over the $(U,V,E)$ graph. The maximum $B$-matching value over a coalition $S_U \cup S_V$ where $S_U\subseteq U$ and $S_V \subseteq V$ is the maximum $B$-matching value over the subgraph $(S_U,S_V,S_U\times S_V)$ with the associated $W$ and $B$ functions.
\end{definition}
 The following definition follows the definition of the core of $B$-matching in [12].  

\begin{definition}
    An allocation $\mathbf{x}$ is a \textbf{core allocation} of the $B$-matching problem if both of the following conditions are satisfied. (i) For any coalition $S_U\cup S_V,$ where $S_U\subseteq U$ and $S_V\subseteq V,$ the sum of the allocation of the nodes in the coalition is greater than or equal to the maximum $B$-matching value over the coalition. (ii) The total allocation of all nodes in $G$ is equal to the maximum $B$-matching value over the $G$ coalition, i.e. $\sum_{g\in G}x_g=\max_{M \in\mathfrak{B}} \sum_{(u,v)\in M}W(u,v)$ where $\mathfrak{B}$ is the set of $B$-matchings over the complete graph. 

    We associate with the core allocation $\mathbf{x},$ a $B$-matching $M\subseteq E$ that achieves the maximum $B$-matching value and denote the pair $(\mathbf{x},M)$ as a \textbf{nodes-core solution}.
\end{definition}

As an illustration, consider a multi-agent task assignment setting, where each agent is able to perform multiple tasks and each task can be performed by several agents. The profit that is realized from an agent performing a task depends on both the agent and the nature of the task. The previous definitions can be interpreted as follows. The two node groups $U$ and $V$ are the group of agents and the group of tasks, respectively. A connected edge $(u,v)$ is an agreement that agent $u$ performs task $v.$ The profit gained as agent $u$ performs task $v$ is the weight of the edge joining them, $W(u,v).$ The profit of matched edges can then be divided among the agents and the tasks. The $B$-value for a task is an upper bound on the maximum number of agents that can perform the task and the $B$-value for an agent is the maximum number of tasks it can perform. The allocation $\mathbf{x}$ dictates how much gain each agent and each task get from the profits collected from all matched edges.

In this paper, we are considering further a different perspective for the $B$-matching. We represent each node $g$ as $B(g)$ copies and specify the copies of the nodes that are connected. We define the $B$-matching and the core from this perspective as follows.

\begin{definition}
    Consider a complete bipartite graph $(U,V,E),$ together with a weighing function $W:E\rightarrow \mathbb{Q}^+,$ and a $B$-value $B:G \rightarrow \mathbb{Z}^+.$ Consider $\mathcal{U},\mathcal{V},\mathcal{G},$ and $\mathcal{E}$ as defined earlier. Let $\mathcal{M}\subseteq \mathcal{E},$ the degree of a node $\delta^{\mathcal{M}}(g_k)\leq1$ for all $ g\in \mathcal{G}, k\in[B(g)],$ and $|\{(u_i,v_j)\in \mathcal{M}: i\in [B(u)],j\in [B(v)]\}|\leq1$ for each $u\in U$ and $v\in V,$ then $\mathcal{M}$ is a \textbf{$B$-copies matching}.
\end{definition}

\begin{definition}
    A \textbf{reduction function} $r$ maps a pair of copies allocations and $B$-copies matching to a pair of nodes allocation and nodes $B$-matching as follows. If $(\mathbf{x},M)=r(\{\mathbf{a}_g\}_{g\in G},\mathcal{M}),$ then $M=\{(u,v): \exists i\in [B(u)],\ j\in [B(v)],\text{ such that } (u_i,v_j)\in \mathcal{M}\}$ and ${x_g}=\sum_{k\in [B(g)]}a_{g,k}.$ Note that $M$ is still a $B$-matching as previously defined because for each node $g\in G,\ \delta^M(g)\leq B(g).$  
\end{definition}

From this paper's perspective, each node $g$ is expanded as $B(g)$ distinct copies, each treated as a node in an expanded graph and gets its own gain from the profit of the matches; a node's gain then is the sum of the gains collected by all of its copies. 

\begin{definition}
    A \textbf{copies-core solution} is a pair of copies allocations and $B$-copies matching $(\{\mathbf{a}_g\}_{g\in G},\mathcal{M})$  that satisfies the following three conditions.
     \begin{enumerate}[label=\arabic*.]
        \item \textbf{Edge saturation:} \[\forall (u_i,v_j)\in \mathcal{M},\ a_{u,i}+a_{v,j}=W(u,v)\]
        \item \textbf{Stability against pairwise deviation:}\\ Let $(\mathbf{x},M)=r(\{\mathbf{a}_g\}_{g\in G},\mathcal{M}),$ then \[\forall (u,v)\notin M,\ \forall i \in [B(u)],\  j \in [B(v)],\]\[ a_{u,i}+a_{v,j}\geq W(u,v)\]
        \item \textbf{Zero-gain for unmatched copies:}  \[ \forall u_i \in \mathcal{U},\text{ if } \nexists v_j\in \mathcal{V}\text{ s.t. } (u_i,v_j)\in \mathcal{M},\text{ then }a_{u,i}=0,\] 
        \[\forall v_j \in \mathcal{V},\text{ if } \nexists u_i\in \mathcal{U}\text{ s.t. } (u_i,v_j)\in \mathcal{M},\text{ then }a_{v,j}=0\]        
    \end{enumerate}
\end{definition}

\begin{proposition}
    If $(\{\mathbf{a}_g\}_{g\in G},\mathcal{M})$ is a copies-core solution then $r(\{\mathbf{a}_g\}_{g\in G},\mathcal{M})$ is a nodes-core solution.
\end{proposition}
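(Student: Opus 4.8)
The plan is to derive the two defining properties of a nodes-core solution --- efficiency and the coalitional inequalities of Definition 3 --- by aggregating the three copy-level conditions, the coalitional inequalities being obtained by ``lifting'' a maximum-weight $B$-matching of a coalition into the copies graph. Write $(\mathbf{x},M)=r(\{\mathbf{a}_g\}_{g\in G},\mathcal{M})$, and for a $B$-matching $N$ abbreviate $W(N)=\sum_{(u,v)\in N}W(u,v)$; recall $x_g=\sum_{k\in[B(g)]}a_{g,k}$.

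First I would establish the efficiency identity $\sum_{g\in G}x_g=W(M)$. Expanding, $\sum_{g\in G}x_g=\sum_{g_k\in\mathcal{G}}a_{g,k}$. By the zero-gain condition the unmatched copies contribute $0$, and since $\mathcal{M}$ is a $B$-copies matching each matched copy lies in exactly one edge of $\mathcal{M}$; grouping the surviving terms edge by edge and applying edge saturation gives $\sum_{g\in G}x_g=\sum_{(u_i,v_j)\in\mathcal{M}}W(u,v)$. The ``at most one copy-edge per node pair'' clause of Definition 5 makes $(u_i,v_j)\mapsto(u,v)$ a bijection of $\mathcal{M}$ onto $M$, so the last sum is $W(M)$.

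Next the coalitional inequalities. Fix a coalition $S=S_U\cup S_V$ and let $N$ be a maximum-weight $B$-matching of $(S_U,S_V,S_U\times S_V)$. I would lift $N$ to a copy choice: on each $u\in S_U$ assign the edges of $N$ at $u$ injectively to copies $u_k$, $k\in[B(u)]$ --- possible since $\delta^N(u)\le B(u)$ --- arranging that an edge of $N$ that also lies in $M$ receives the copy of $u$ used by the (unique) corresponding edge of $\mathcal{M}$; the count $B(u)-\delta^{M\cap N}(u)\ge\delta^{N\setminus M}(u)$ shows enough copies are free for the remaining edges, and one does the symmetric assignment on $S_V$. For $(u,v)\in N$ with chosen copies $u_i,v_j$: if $(u,v)\in M$ then $(u_i,v_j)\in\mathcal{M}$, so $a_{u,i}+a_{v,j}=W(u,v)$ by edge saturation; if $(u,v)\notin M$ then $a_{u,i}+a_{v,j}\ge W(u,v)$ by stability against pairwise deviation. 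Summing over $N$ and regrouping by copies, each copy of each node appearing at most once by injectivity and all $a_{g,k}\ge0$, yields $W(N)\le\sum_{g\in S}\sum_{k\in[B(g)]}a_{g,k}=\sum_{g\in S}x_g$, i.e. condition (i). Taking $S=G$ gives $\max_{M'\in\mathfrak{B}}W(M')\le\sum_{g\in G}x_g=W(M)$; the reverse inequality is immediate since $M\in\mathfrak{B}$ by Definition 6, so $\sum_{g\in G}x_g=W(M)=\max_{M'\in\mathfrak{B}}W(M')$, which is condition (ii) and also certifies that $M$ attains the maximum $B$-matching value. Hence $(\mathbf{x},M)$ is a nodes-core solution.

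I expect the main obstacle to be the lifting step: making the injective copy-assignment precise while forcing it to agree with $\mathcal{M}$ on the shared edges $M\cap N$ (a short Hall-type / greedy argument powered by the degree bounds), and keeping in mind that stability only yields $a_{u,i}+a_{v,j}\ge W(u,v)$ at the node level --- so for edges of $N$ already present in $M$ one must pass through the specific saturated copy-edge of $\mathcal{M}$ rather than an arbitrary pair of copies. The argument also uses $a_{g,k}\ge0$ when bounding a per-node partial sum by $x_g$. A heavier alternative would run through LP duality: the bipartite $B$-matching LP is integral, a copies-core solution is an optimal dual solution of the copies-graph matching LP, and aggregating copies produces an optimal dual of the node-level $B$-matching LP, which is exactly a nodes-core solution.
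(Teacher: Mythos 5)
Your proof is correct and follows essentially the same route as the paper: the paper likewise lifts the coalition's maximum $B$-matching into the copies graph (splitting it into the edges already present in $\mathcal{M}$, handled by edge saturation, and the remaining edges assigned to unused copies, handled by stability against pairwise deviation), sums the resulting bounds using nonnegativity of aspirations, and derives efficiency from edge saturation plus zero-gain. Your explicit degree count $B(u)-\delta^{M\cap N}(u)\ge\delta^{N\setminus M}(u)$ justifying the copy assignment is a small point the paper glosses over, but it is not a different argument.
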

\begin{proof} 
     \begin{enumerate}[label=\arabic*.]
       \item Consider any $S=S_U\cup S_V$ be a subset of $G,$ and $E_S=S_U\times S_V$ be the edges over $S.$ \item Let \begin{itemize}
       \item $M_S^* \subseteq E_S$ be a max nodes $B$-matching for the set $S,$ i.e. $M^*_S=$  $\arg\max_{M_S}\sum_{(u,v)\in M_S}W(u,v),$ where $M_S$ is a $B$-matching over $S,$
       \item  $v(M_S^*)$ be the value of such $B$-matching where $v(M_S^*)=\sum_{(u,v)\in M_S^*}W(u,v),$ and
       \item $(\textbf{x},M)=r(\{\mathbf{a}_g\}_{g\in G},\mathcal{M})$
   \end{itemize}
   \item  Let  $\mathcal{M}'_1=\{(u_i,v_j):(u_i,v_j)\in \mathcal{M}$ and $(u,v)\in M_S^*\}$
   \item  Construct $\mathcal{M}'_2$ in the following way. 
   \begin{itemize}
       \item Loop over $(u,v)\in M_S^*\setminus M.$ For each $(u,v)\in M_S^*\setminus M,$ pick any free $u_i$ and $v_j,$ i.e. neither matched in $\mathcal{M}_1'$ nor in $\mathcal{M}_2',$ and add $(u_i,v_j)$ to $\mathcal{M}'_2.$
   \end{itemize}
   \item 
   \begin{itemize}
       \item   For all $(u_i,v_j)\in \mathcal{M}'_1,$  $W(u,v)=a_{u,i}+a_{v,j},$ since $(u_i,v_j)\in\mathcal{M}.$ 
       \item    For all $(u_i,v_j)\in \mathcal{M}'_2,$ since $(u,v)\notin M,$ then $\forall i \in [B(u)]$ and $j \in [B(v)], W(u,v) \leq a_{u,i}+a_{v,j}.$
   \end{itemize}
   \item This implies that 
   \[\sum_{u\in S_U,\\ i \in [B(u)]}\hspace{-6px} a_{u,i}+\sum_{v\in S_V, j \in [B(v)]}\hspace{-6px} a_{v,j}\geq \hspace{-6px} \sum_{(u,v)\in M_S^*}W(u,v).\]
    \item Due to \textbf{edge saturation} and \textbf{zero-gain for unmatched copies,} the following holds. 
    \[\sum_{u\in U}\sum_{i \in [B(u)]}a_{u_i}+\sum_{v\in V}\sum_{j \in [B(v)]}a_{v_j}= \sum_{(u_i,v_j)\in \mathcal{M}}W(u,v).\]
    Since $\mathcal{M}$ is a $B$-copies matching, then,
    \[\sum_{u\in U}\sum_{i \in [B(u)]}a_{u_i}+\sum_{v\in V}\sum_{j \in [B(v)]}a_{v_j}=\sum_{(u,v)\in M}W(u,v).\]
    From point $6,$ \[\sum_{u\in U}\sum_{i \in [B(u)]}a_{u,i}+\sum_{v\in V}\sum_{j \in [B(v)]}a_{v,j}\geq \sum_{(u,v)\in M^*}W(u,v),\] where $M^*$ is a maximum $B$-matching over $G.$ \\
    Finally, since $M^*$ is a maximum $B$-matching and the total aspirations of the copies equal to a $B$-matching, then from the previous inequality and equation, we get that \[\sum_{u\in U}\sum_{i \in [B(u)]}a_{u,i}+\sum_{v\in V}\sum_{j \in [B(v)]}a_{v,j}= \sum_{(u,v)\in M^*}W(u,v),\] where $M^*$ is a maximum $B$-matching over $G.$ 
   \end{enumerate}
   From points $6$ and $7,$ and the definition of the reduction function, then \[\sum_{u\in S_U}x_{u}+\sum_{v\in S_V}x_{v}\geq \sum_{(u,v)\in M_S^*}W(u,v),\] and 
   \[\sum_{u\in U}x_{u}+\sum_{v\in V}x_{v}= \sum_{(u,v)\in M^*}W(u,v).\]
   Hence, $(\textbf{x},M)$ is a nodes-core solution.
\end{proof}

 Consider the copies-core solution of the preceding task assignment setting. The copies-core solution in this setting specifies the allocations of the profit for each copy of the nodes. This finer detailing of the copies' allocations shows the gains of each node, robot or task, from each match it is part of, thus, decoupling the individual matches' profit distribution. In a copies-core solution, the allocation specified for each copy of a node is only its profit allocation from the match that this copy is a part of.  

In the next two sections, we define a centralized polynomial time algorithm as well as distributed dynamics that converge to the copies-core solution. Since we are aiming to reach the copies-core solution, we work with the expanded graph $(\mathcal{U},\mathcal{V},\mathcal{E}),$ as defined above. The following definitions will be used in the subsequent algorithms. Furthermore, we will overload $\mathbf{a}_g$ in the dynamics to be the agents' allocation aspirations for the copies. In our dynamics, an edge is formed only if these aspirations can be fulfilled by such match. 

\begin{definition}
    The \textbf{free copies} set in a matching $\mathcal{M}$ is defined as  $\mathcal{F}^\mathcal{M}=\{u_i:\nexists v_j \text{ s.t. }(u_i,v_j)\in\mathcal{M}\}\cup \{v_j:\nexists u_i \text{ s.t. }(u_i,v_j)\in\mathcal{M}\}.$ Node $g$ free copies are $\mathcal{F}^\mathcal{M}(g)=\{g_1,g_2,...,g_{B(g)}\}\cap \mathcal{F}^\mathcal{M}.$ Additionally, denote the set of free copies with strictly positive aspirations as $\mathcal{F}_+^{\mathcal{M}}$ and $\mathcal{F}_+^{\mathcal{M}}(g)$ for any node $g\in G.$
\end{definition}
\begin{definition}
    The \textbf{partner function} $P^\mathcal{M}$ in a matching $\mathcal{M}$ is defined as $P^\mathcal{M}:\mathcal{G}\rightarrow{2^G},$ where $P^\mathcal{M}(u_i)=\{v\}$ and $P^\mathcal{M}(v_j)=\{u\}$ if $(u_i,v_j)\in\mathcal{M},$ and $P^\mathcal{M}(g_k)=\emptyset$ if $g_k\in\mathcal{F}^\mathcal{M}.$
\end{definition}
The following is an assumption that we are going to consider for both the centralized and distributed algorithms.
\begin{assumption}
For all $u\in U,\ B(u)\leq |V|$, and for all $v\in V,\ B(v)\leq |U|;$ meaning that the nodes cannot have more copies than the number of nodes in the opposite class.  
\end{assumption}

If a given $B$-matching problem does not satisfy this assumption,  the degree of the violating nodes still can not exceed the number of nodes in the opposite class in any $B$-matching. Hence, the problem remains the same if we decrease the $B$-values of these nodes just enough to satisfy the above assumption. 

The algorithms use the aforementioned functions, however, the superscript $\mathcal{M}$ is suppressed and considered as the matching at the current iteration.

\section{PATHS TRANSFERS ALGORITHM}
In this section, we define the paths transfers algorithm, which is a centralized algorithm that is in the nature of the Hungarian method [10] and influenced by BLMA's convergence proof in [13]. The algorithm converges in a polynomial time to a copies-core solution. The algorithm proceeds as follows.

\begin{enumerate}[label=\arabic*.]
    \item Initialize the matching $\mathcal{M}$ to be empty and the aspirations of all copies to be $0$. Set $\epsilon$ to be the largest discretization value for all the weights of the edges. 
    \item \textbf{Over-aspiration step:} In this stage, the algorithm loops over all nodes in one nodes' class, without loss of generality, we will set that to be the ${V}$ class. For each $v\in {V},$ set the aspiration of all copies $v_j,\ j\in [B(v)]$ to the maximum weight of the edges that has $v$ as an endpoint, i.e. set $a_{v,j}=\max_{u\in U}W(u,v)$ for all $v\in V $and $j\in [B(v)].$

    \item \textbf{Outer loop:} Loop while $\mathcal{F}_+\neq \emptyset.$ 
    \begin{itemize}
         \item  Pick any $g_k \in\mathcal{F}_+,$ without loss of generality assume $g\in {V}$ and denote this copy by $v^*_{j^*}.$ If $g\in \mathcal{U},$ the same subsequent steps should be followed exchanging the $u_i$ and $U$ instances with the $v_j$ and $V$ instances.
        \item \textbf{Inner loop:}
        \item[] Loop so long as  $|\mathcal{F}_+|$ has not decreased.
        Consider the current matching $\mathcal{M},$ let $ (\mathbf{x},M)=r(\{\mathbf{a}_g\}_{g\in G},\mathcal{M}),$  and construct the equality directed graph $(\mathcal{U},\mathcal{V},\mathcal{E}'),$  as follows: \begin{itemize}
        \item $(u_i,v_j)\in \mathcal{E}'$ if $(u_i,v_j)\in \mathcal{M}$ 
        \item $(v_{{j}},u_{{i}})\in \mathcal{E}'$ if $(u,v)\notin M,\ a_{u,{{i}}}+a_{v,{{j}}}=W(u,v).$ 
        
    \end{itemize}
     Check the following cases in order. If one occurs, follow the steps of that case.
     \begin{enumerate}[label=\roman*.]
    \item \textbf{Decreasing aspiration case:} If there is no edge in $\mathcal{E}'$ directed from $v^*_{j^*}$ to any $u_i\in\mathcal{U},$ then decrease $a_{v^*,j^*}$ by $\epsilon.$
        \begin{itemize}
            \item [] If $a_{v^*,j^*}$ reaches zero, then $|\mathcal{F}_+|$ decreased by 1, then continue with the outer loop of step $3.$ Otherwise, reconstruct $\mathcal{E}'$ and repeat step $3's$ inner loop considering the same $v^*_{j^*}.$
            \end{itemize}

        \item  \textbf{Augmenting path case:} If there exists a directed path $\mathcal{P}$ in $\mathcal{E}',$ from the considered copy $v^*_{j^*}$ to some copy $u_{i}\in \mathcal{F}^{\mathcal{M}},$ then let $A_{tmp}=\mathcal{P}\setminus \mathcal{M}$ and  $R_{tmp}=\mathcal{P}\cap \mathcal{M}$ disregarding the directions of the edges. Add the edges in $A_{tmp}$ to $\mathcal{M}$ and remove the edges in $R_{tmp}$ from $\mathcal{M}$. 

        \begin{itemize}
            \item [] This case will lead to $|\mathcal{F}_+|$ decreasing by $1$ or $2$ depending on whether $a_{u,i}$ is greater than or equal to zero.  
        \end{itemize}
        \item \textbf{Copies exchange case:} If there exists a directed path $\mathcal{P}$ in $\mathcal{E}',$ from the considered copy $v^*_{j^*}$ to some copy $v^{**}_{j^{**}}$ with $a_{v^{**},j^{**}}=0,$ then let $A_{tmp}=\mathcal{P}\setminus \mathcal{M}$ and  $R_{tmp}=\mathcal{P}\cap \mathcal{M}$ disregarding the directions of the edges. Add the edges in $A_{tmp}$ to $\mathcal{M}$ and remove the edges in $R_{tmp}$ from $\mathcal{M}.$ 

         \begin{itemize}
            \item [] This case will lead to $|\mathcal{F}_+|$ decreasing by 1 as a copy with non-zero aspiration will be matched instead of a copy with zero aspiration.  
        \end{itemize}

        \item \textbf{Aspiration transfer case:} If none of the above cases occur, then start with $v^*_{j^*},$ and construct the following sets. Let $\mathcal{U}^{(1)}=\{u_i\in \mathcal{U}: (v^*_{j^*},u_i)\in \mathcal{E}'\}$ and  $\mathcal{V}^{(1)}=\{v_j\in \mathcal{V}: (u_i,v_j)\in \mathcal{E}', u_i\in \mathcal{U}^{(1)}\}.$ If $\mathcal{U}^{(1)}$ and $\mathcal{V}^{(1)}$ are non-empty, then let $\mathcal{U}^{(2)}=\{u_i\in \mathcal{U}: (v_j,u_i)\in \mathcal{E}',v_j\in \mathcal{V}^{(1)}\}\setminus \mathcal{U}^{(1)}$ and  $\mathcal{V}^{(2)}=\{v_j\in \mathcal{V}: (u_i,v_j)\in \mathcal{E}', u_i\in \mathcal{U}^{(2)}\}.$ Keep on constructing $\mathcal{U}^{(n)}$ and  $\mathcal{V}^{(n)},$ as long as $\mathcal{U}^{(n-1)}$ is non-empty, as follows. Let $\mathcal{U}^{(n)}=\{u_i\in \mathcal{U}: (v_j,u_i)\in \mathcal{E}',v_j\in \mathcal{V}^{(n-1)}\}\setminus \cup^{m=n-1}_{m=1} \mathcal{U}^{(m)}$ and  $\mathcal{V}^{(n)}=\{v_j\in \mathcal{V}: (u_i,v_j)\in \mathcal{E}', u_i\in \mathcal{U}^{(n)}\}.$ Let the final constructed sets be $\mathcal{U}^{(M)}$ and $\mathcal{V}^{(M)}.$ Decrease the aspiration of all copies in $\cup^{m=M}_{m=1} \mathcal{V}^{(m)},$ in addition to $v^*_{j^*},$ by $\epsilon,$ and increase the aspiration of all copies in $\cup^{m=M}_{m=1} \mathcal{U}^{(m)}$ by $\epsilon.$
        
        \begin{itemize}

            \item [] One implementation of the above steps will decrease the total aspirations of copies in $\mathcal{V}$ and increase the total aspirations of copies in $\mathcal{U}.$ The step must end once no more copies can be added to $\mathcal{U}^{(n)}$ or $\mathcal{V}^{(n)},$ which must happen since old copies in previous sets $\mathcal{U}^{(1)},\mathcal{V}^{(1)},\mathcal{U}^{(2)},\mathcal{V}^{(2)},..., \mathcal{U}^{(n-1)},\mathcal{V}^{(n-1)}$ cannot be added to $\mathcal{U}^{(n)},\mathcal{V}^{(n)}.$ This step may not lead to a decrease in $|\mathcal{F}_+|,$ if so, repeat step $3$'s inner loop starting from the same $v^*_{j^*}$ and reconstruct $\mathcal{E}'$.
            \end{itemize}
        \end{enumerate}

        \item [] The inner loop of step $3$ must end in a finite time. The inner loop will end if it goes through the \textbf{augmenting path} or \textbf{copies exchange} cases or if the considered copy reaches zero aspiration in the \textbf{decreasing aspiration} case. The inner loop cannot infinitely go through the \textbf{decreasing aspiration} and the \textbf{aspiration transfer} case since the total aspirations of copies in $\mathcal{V}$ strictly decreases with each implementation of these cases and it is bounded below by zero. Hence, in a finite number of steps, $|\mathcal{F}_+|$ will decrease by either the loop going through the \textbf{augmenting path} or \textbf{copies exchange} cases or the considered copy reaching zero aspiration. 
        
    \end{itemize}
    \item[] The outer loop of step $3$ ends in a finite time because $|\mathcal{F}_+|$ strictly decreases with each iteration.

\end{enumerate}
\begin{thm}
    The Paths Transfers algorithm reaches the copies-core solution in a finite number of steps.
\end{thm}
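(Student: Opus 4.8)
The plan is to run an induction over the iterations of the algorithm, carrying a package of loop invariants, and then to read the three defining conditions of a copies-core solution off these invariants together with the exit condition of the outer loop. The invariants I would maintain, for the current matching $\mathcal{M}$ and aspiration vector $\{\mathbf{a}_g\}$ with $(\mathbf{x},M)=r(\{\mathbf{a}_g\},\mathcal{M})$, are: (a) every aspiration is a nonnegative integer multiple of $\epsilon$; (b) \emph{edge saturation}, $a_{u,i}+a_{v,j}=W(u,v)$ for all $(u_i,v_j)\in\mathcal{M}$; (c) \emph{coverage}, $a_{u,i}+a_{v,j}\ge W(u,v)$ for every node pair $(u,v)\notin M$ and all $i\in[B(u)]$, $j\in[B(v)]$; (d) every free copy in $\mathcal{U}$ has zero aspiration (so $\mathcal{F}_+\subseteq\mathcal{V}$) and the total aspiration of $\mathcal{V}$-copies never increases after the over-aspiration step; and (e) $\mathcal{M}$ is a valid $B$-copies matching. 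After the initialization and over-aspiration step all of these hold: $\mathcal{M}=\emptyset$ makes (b) and (e) vacuous, (a) holds since $\epsilon$ divides every weight, (d) holds since the $\mathcal{U}$-copies are untouched, and (c) holds because $a_{v,j}=\max_u W(u,v)\ge W(u,v)$ while $a_{u,i}=0$.

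The bulk of the work is showing (a)--(e) survive each of the four cases in the inner loop, using the structure of the equality digraph $\mathcal{E}'$. In the \emph{decreasing-aspiration} case only $a_{v^*,j^*}$ drops, and the case hypothesis (no $\mathcal{E}'$-edge out of $v^*_{j^*}$) says exactly that no non-matching pair through $v^*_{j^*}$ is tight, so by (a) each such pair has slack at least $\epsilon$ and (c) is kept; (b), (d), and (e) are untouched because $v^*_{j^*}$ is free. In the \emph{aspiration-transfer} case the key point is that, since none of the earlier cases applied, there is no augmenting path out of $v^*_{j^*}$, hence every copy in $\bigcup_m\mathcal{U}^{(m)}$ is already matched; therefore each matching edge and each tight non-matching edge incident to an adjusted copy has both endpoints adjusted, by $+\epsilon$ and $-\epsilon$ respectively, so (b) and the tight part of (c) are preserved, the only pairs losing $\epsilon$ net are non-tight ones which keep (c) by (a), and (d) holds since only matched $\mathcal{U}$-copies are raised while $\mathcal{V}$-aspirations only decrease. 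In the \emph{augmenting-path} and \emph{copies-exchange} cases aspirations do not change, so (a), (d), and (c) for pairs not touched by the flip are immediate; one checks that flipping a node-simple alternating path (take $\mathcal{P}$ shortest) preserves the degree bounds and the one-copy-edge-per-node-pair condition of (e), and that every edge entering $\mathcal{M}$ was tight, which gives (b).

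Termination is via monotone potentials. Inside the inner loop, the augmenting-path and copies-exchange cases end it at once by strictly decreasing $|\mathcal{F}_+|$, while the decreasing-aspiration and aspiration-transfer cases each strictly decrease the nonnegative, $\epsilon$-discrete quantity $a_{v^*,j^*}$; hence after finitely many passes the inner loop exits either through the augmenting-path or copies-exchange case or with $a_{v^*,j^*}$ reduced to $0$, which also drops $|\mathcal{F}_+|$. Each pass of the outer loop therefore strictly decreases $|\mathcal{F}_+|$, and since $|\mathcal{F}_+|\ge 0$ and, by (d), never increases, the outer loop halts. At that point $\mathcal{F}_+=\emptyset$, so with (d) every free copy has zero aspiration --- this is condition 3 of a copies-core solution. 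Invariant (b) is condition 1 and invariant (c) is condition 2, so the terminal $(\{\mathbf{a}_g\},\mathcal{M})$ is a copies-core solution, and by Proposition 1 its reduction is a nodes-core solution.

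The step I expect to be the real obstacle is the preservation of the coverage invariant (c) across the matching-changing cases: when a node pair $(u,v)$ leaves $M$ in an augmenting-path or copies-exchange flip, (c) must hold for \emph{every} copy pair of $(u,v)$, whereas the flip only guarantees tightness of the single edge that had been in $\mathcal{M}$. Making this go through cleanly will likely require either strengthening (c) with structural information relating the aspirations of a node's matched copies to those of its free copies, or a careful argument that the node-simple choice of $\mathcal{P}$ --- in particular that such a path cannot visit two copies of the same node --- prevents a pair whose slack on some copy pair has dropped below $W(u,v)$ from ever being the pair that exits $M$. Pinning down the exact form of (c) and verifying it in the aspiration-transfer case are where most of the care will go.
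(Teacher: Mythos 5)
Your overall architecture --- a package of loop invariants (grid membership, edge saturation, coverage, zero aspiration for free $\mathcal{U}$-copies, validity of the $B$-copies matching) preserved through the four inner-loop cases, plus termination via the strict decrease of $|\mathcal{F}_+|$ in the outer loop and a bounded $\epsilon$-discrete potential in the inner loop --- is essentially the paper's proof, and your added invariant that free $\mathcal{U}$-copies always have zero aspiration usefully makes explicit why the ``without loss of generality $g\in V$'' choice in the outer loop is justified. Your potential for the inner loop ($a_{v^*,j^*}$ itself rather than the paper's total $\mathcal{V}$-aspiration) also works, since $v^*_{j^*}$ is decremented in both the decreasing-aspiration and aspiration-transfer cases.

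The one step you leave unresolved --- preservation of coverage for \emph{all} copy pairs of a node pair $(u,v)$ that exits $M$ during an augmenting-path or copies-exchange flip --- is a genuine obligation, but it closes without strengthening invariant (c). The key observation is that every matching edge $(u_{i},v_{j})$ removed by the flip is flanked on the path by equality edges: the path enters $u_i$ via an equality edge $(v'_{j'},u_i)$ with $(u,v')\notin M$, so by coverage $a_{u,i}+a_{v',j'}=W(u,v')=\min_{i'',j''}\bigl(a_{u,i''}+a_{v',j''}\bigr)$, which forces $a_{u,i}=\min_{i''}a_{u,i''}$; symmetrically the equality edge leaving $v_j$ (or, in the copies-exchange case, the terminal condition $a_{v^{**},j^{**}}=0$) forces $a_{v,j}=\min_{j''}a_{v,j''}$. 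Since aspirations are unchanged by the flip and $a_{u,i}+a_{v,j}=W(u,v)$ by edge saturation of the removed edge, every other copy pair satisfies $a_{u,i''}+a_{v,j''}\ge a_{u,i}+a_{v,j}=W(u,v)$, which is exactly coverage for the pair now outside $M$. With that lemma in hand (it is the ``copies on the path have minimum aspirations among their node's copies'' remark in the paper's proof), your argument is complete.
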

\begin{proof}
    First, the initialization and over-aspiration step guarantees that we reach a state $(\{\mathbf{a}_g\}_{g\in G},\mathcal{M})$ that satisfies both the \textbf{edge saturation} and \textbf{stability against pairwise deviation} conditions. The algorithm then aims to satisfy the \textbf{zero-gain for unmatched copies} condition while keeping the first two conditions satisfied. 
    After each iteration of the outer loop of step $3,$ both \textbf{edge saturation} and \textbf{stability against pairwise deviation} conditions remain satisfied. Consider one implementation of the outer loop of step $3$.
    \begin{itemize}

     \item For the decreasing aspiration case, no change in the matching occurs, and the considered copy decreases its aspiration since it strictly satisfies the \textbf{stability against pairwise deviation}. Since we decrease by $\epsilon$ on an $\epsilon$-discritized grid, the copy will remain satisfying the \textbf{stability against pairwise deviation}, and since it is not matched, it will not affect the \textbf{edge saturation} condition.
    \item For the augmenting path and copies exchange case, new connections are only made between copies whose total aspirations are exactly equal to the weight of the edge between them, since these edges belong to the equality graph. Hence, the \textbf{edge saturation} condition is preserved. Note that all copies on the paths in these cases have minimum aspirations compared to the aspirations of the other copies of their respective nodes. In addition, the summation of aspirations of these, now unmatched, copies equal to the weight of the edge between them. Thus, these copies along with other copies of these two nodes should still preserve the \textbf{stability against pairwise deviation} condition.
    \item For the aspiration transfer case, the matching does not change, and for each edge considered that is both in the matching and the equality graph, one endpoint decreases its aspiration by epsilon while the other endpoint increases its aspiration by epsilon, hence keeping their total aspiration equal to the weight of the edge and preserving the \textbf{edge saturation} condition. In addition, for each copy $g_k$ that decreases its aspiration, any copy of the opposite class that may break the \textbf{stability against pairwise deviation} condition, tightly satisfies the condition with $g_k$ before the decrease in its aspiration, increases its aspiration by $\epsilon,$ hence, preserving this condition as well.
    \end{itemize}
    Finally, since the outer loop of step $3$ in the algorithm strictly decreases $|\mathcal{F}_+|$ then the algorithm will eventually stop and the \textbf{zero-gain for unmatched copies} condition will be satisfied. Hence, the final state will be a copies-core solution.
\end{proof}

Note that the Paths Transfers algorithm is polynomial in the input parameters including the weight values but can be modified to be polynomial in $|\mathcal{E}|$ and $|\mathcal{G}|$ and exclude the weight values dependency. To achieve that, the \textbf{aspiration transfer case} should be modified such that the copies aspirations change by the minimum change of aspiration that leads to a new edge in the equality graph to appear or for one of the copies to hit zero aspiration. The \textbf{decreasing aspiration case} should be modified such that the decrease in aspiration is the minimum decrease that either leads the reconstructed $\mathcal{E}'$ to contain an edge from the considered copy $v^*_{j^*}$ to any other copy or leads $a_{v^*,j^*}$ to reach zero.

\section{ $B$-MATCHING PROPOSALS ALGORITHM}
In this section, we introduce the $B$-Matching Proposals dynamics. Intuitively, the algorithm follows these steps:

\begin{enumerate}[label=\arabic*.]
    \item Agents come with arbitrary non-negative initial aspirations for all their copies. The aspirations value should lie on a grid of width $\epsilon$, a discretization value for all the weights of the edges.
    \item The algorithm then iterates over the following steps:
    \begin{enumerate}[label=\roman*.]
        \item An agent is activated uniformly at random, which in turn chooses, using a uniform distribution, an agent in the opposite class of nodes to propose forming a match with. 
        \item If the current agents are matched, the proposal is skipped, otherwise, the proposing agent asks for and receives an aspiration value from the other agent. 
        \item The receiving agent sends the minimum aspiration of its unmatched copies if any exist, otherwise, it sends the minimum aspiration of the matched copies. Similarly, the proposing agent considers the minimum aspiration of its unmatched copies if any exist, otherwise, it considers the minimum aspiration of the matched copies. 
    If the total of the received aspiration in addition to the proposing agent's considered aspiration raised by $\epsilon$ is less than or equal to the proposed match weight, a match between the two copies of the considered aspirations is formed, otherwise, the proposal fails.
    \begin{enumerate}[label=\alph*.]
        \item If a match is successfully formed, the proposing agent increases its aspiration to be the difference between the weight of the edge joining the proposer and the receiver and the receiver's aspiration. If any of the copies in the new match were already matched, they dissolve their previous matches first to join the new one.
        \item If a match proposal fails, the proposing agent decreases its aspiration in one of the copies that are not matched and has an aspiration greater than zero, if any exist, by $\epsilon.$         
    \end{enumerate} 
    \end{enumerate}
\end{enumerate}
\begin{algorithm}[!htbp]
    \DontPrintSemicolon
    \renewcommand{\thealgocf}{}
    \caption{$B$-Matching Proposals Algorithm}
   \KwInput{ two sets of agents $U,V,$ edges weights, $W,$ the $B$-value, and the algorithm horizon, $H.$ }
   \KwSet{$\epsilon$ to be a discretization of the $W$ values.}
   \KwInit{the partners $P(g_k)\leftarrow\emptyset\ \ \forall g_k \in \mathcal{G};$ \linebreak the aspirations $ a_{g,k}\leftarrow a^0_{g,k},$ an arbitrary value on the $\epsilon-$grid, $\ \forall g\in {G}, k \in [B(g)]$; \linebreak the free copies $\mathcal{F}(g)=[B(g)]\ \ \forall g \in G$.}
    \For{h=1:H}{ Choose $p\in G$ uniformly at random.\\ \eIf{$p\in U$}{ Choose uniformly at random $r\in V$\\ $u=p,\ v=r$\\}{ Choose uniformly at random $r\in U$ \\ $v=p,\ u=r$} \vspace{8px}
    \If{$\nexists k \in [B(p)]$ s.t. $P(p_k)=\{r\}$}{{\eIf{$\mathcal{F}(r)\neq \emptyset$}{$j=\arg \min_{j'\in \mathcal{F}(r)}{a_{r,{j'}}}$} {$j=\arg \min_{j'\in [B(r)]}{a_{r,{j'}}}$}} ReceiverAspiration: $\underline{a}_r=a_{r,j}$\\ \vspace{8px}{\eIf{ $\mathcal{F}(p)\neq \emptyset$}{$i=\arg \min_{i'\in \mathcal{F}(p)}{a_{p,{i'}}}$} {$i=\arg \min_{i'\in [B(p)]}{a_{p,{i'}}}$}} ProposerAspiration: $\underline{a}_p=a_{p,i}$\\ \vspace{8px} \eIf{$\underline{a}_p+\underline{a}_r<W(u,v)$}{\vspace{0.8px} BreakOldConnections:\\ \ \  \textbf{if } $P(p_i)=\{q\}$ \\ \ \ \ \ \ \textbf{find}  $k:P(q_k)=\{p\}$ \textbf{then} \\ \quad \ \   $P(q_k)\leftarrow\emptyset$ and $\mathcal{F}(q)\leftarrow\mathcal{F}(q)\cup \{q_k\}$ \\ \ \ \textbf{end } \ \ \textbf{if} $P(r_j)=\{s\}$\\ \quad \ \ \textbf{find } $l: P(s_l)=\{r\}$ \textbf{ then } \\ \quad \ \ $P(s_l)\leftarrow\emptyset$ and $\mathcal{F}(s)\leftarrow\mathcal{F}(s)\cup \{s_l\}$ \\ \ \  \textbf{end } \\ MakeNewConnection:\\ \quad $P({p_i})\leftarrow \{r\},\ P({r_j})\leftarrow \{p\}, $  \\$ \quad \mathcal{F}(p)\leftarrow \mathcal{F}(p)\setminus \{p_i\},$ \\ \quad$ \mathcal{F}(r)\leftarrow \mathcal{F}(r)\setminus \{r_j\}  $\\ IncProposerAspiration:\\ $\quad a_{p,i}=W(u,v)-\underline{a}_r$}{ \If {$\mathcal{F}_{+}(p)\neq \emptyset$}{$m= \arg \min_{i'\in \mathcal{F}_{+}(p)}{a_{p,{i'}}}$\\ $a_{p,{m}}=a_{p,{m}}-\epsilon$}
     }
     } 
     }
\end{algorithm}
\section{$B$-MATCHING PROPOSALS ALGORITHM ANALYSIS}
\begin{definition}
    A state $(\{\mathbf{a}_g\}_{g\in G},\mathcal{M})$ is \textbf{feasible} if $\mathcal{M}$ is a $B$-copies matching, $a_{g,k}\geq 0$ for all $ g\in G, k\in [B(g)],$ and $a_{u,i}+a_{v,j}= W(u,v) $ for all $ (u_i,v_j)\in \mathcal{M}.$
\end{definition}
\begin{proposition}
    Starting from a feasible state and following the $B$-Matching Proposals algorithm, the state remains feasible.
\end{proposition}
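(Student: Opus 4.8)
The plan is to argue by induction on the iteration index $h$ of the \texttt{for} loop. The base case is exactly the hypothesis: the algorithm starts from a state assumed to be feasible. For the inductive step I would take a feasible state at the beginning of an iteration and check that each branch of one pass returns a feasible state. It pays to strengthen the hypothesis slightly by also carrying the invariant that every aspiration $a_{g,k}$ stays on the $\epsilon$-grid; this is needed to certify non-negativity after a decrement, and it is preserved because the initialization places all aspirations on the grid, a failed proposal subtracts $\epsilon$, and a successful proposal overwrites $a_{p,i}$ by $W(u,v)-\underline a_r$, a difference of two grid values (recall $\epsilon$ is a common discretization of the edge weights, so $\epsilon$ divides every weight).

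The trivial case first: if some copy of $p$ is already matched to $r$, the iteration does nothing and feasibility is inherited verbatim. Otherwise the iteration takes the success branch, $\underline a_p+\underline a_r<W(u,v)$, or the failure branch. In the failure branch the matching $\mathcal M$ is untouched, so it is still a $B$-copies matching and all edge-saturation equalities still hold, since these involve only matched copies and the copy $p_m$ whose aspiration is decremented lies in $\mathcal F_+(p)$, hence is unmatched. Non-negativity survives because $p_m\in\mathcal F_+(p)$ forces $a_{p,m}>0$, so $a_{p,m}\ge\epsilon$ by the grid invariant and $a_{p,m}-\epsilon\ge 0$.

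The success branch carries the actual content, and I would verify the three defining conditions one at a time. For the $B$-copies matching property: \texttt{BreakOldConnections} deletes the $\mathcal M$-edges at $p_i$ and at $r_j$ whenever they exist, so just before \texttt{MakeNewConnection} both $p_i$ and $r_j$ are free, and inserting the single edge $(p_i,v_j)$ keeps every copy's degree at most one; the branch is entered only when no copy of $p$ is matched to $r$, which means $\mathcal M$ presently has no edge between copies of $u$ and copies of $v$, so afterwards there is exactly one, and deleting edges cannot create a second edge for any other node pair. The same guard also rules out the degenerate cases in which $p_i$'s old partner lies among the copies of $r$, or $r_j$'s old partner lies among the copies of $p$, so the two deletions and the one insertion act on pairwise distinct copies. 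For non-negativity: the only aspiration that changes is $a_{p,i}$, set to $W(u,v)-\underline a_r$; from $\underline a_p=a_{p,i}\ge 0$ and $\underline a_p+\underline a_r<W(u,v)$ one gets $a_{p,i}=W(u,v)-\underline a_r>\underline a_p\ge 0$. For edge saturation: the deleted edges simply drop their equality constraints, the inserted edge $(p_i,v_j)$ satisfies $a_{p,i}+a_{r,j}=(W(u,v)-\underline a_r)+a_{r,j}=W(u,v)$ because $\underline a_r=a_{r,j}$ and $a_{r,j}$ is unchanged, and every surviving edge of $\mathcal M$ is untouched since the only modified aspiration is $a_{p,i}$ and the old edge at $p_i$, if any, has just been removed.

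I expect the main obstacle to be the careful bookkeeping in the success branch rather than any deep argument: one has to track the sub-case where $\mathcal F(p)$ or $\mathcal F(r)$ is empty, so that $i$ or $j$ indexes an already matched copy that \texttt{BreakOldConnections} must free; one has to confirm that after the break-and-make steps the partner and free-set data are mutually consistent with $(p_i,v_j)$ being the unique new matched edge; and one has to invoke the branch guard to exclude the degenerate partner coincidences mentioned above. The only conceptual point worth flagging is that \emph{feasibility} as stated is not quite self-propagating on its own --- the $\epsilon$-grid invariant has to be added to the induction hypothesis to close the non-negativity argument in both branches.
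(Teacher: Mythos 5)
Your proposal is correct and follows essentially the same argument as the paper: a case analysis of one iteration showing that the success branch installs an edge saturated by construction, the failure branch only decrements a strictly positive on-grid aspiration (hence stays nonnegative), and the branch guard prevents a second edge between copies of the same node pair. Your version is simply a more careful, fully spelled-out rendering of the paper's sketch, including the explicit observation that the $\epsilon$-grid property must be carried along as part of the invariant — which the paper also uses, though less formally.
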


\begin{proof}
    At each iteration, one proposal is made, if the proposal succeeds, then the aspiration of the copies forming the new matching exactly equals the weight of the edge between them. In addition, the aspirations of the agents remain on the $\epsilon$-grid. If the proposal fails, only agents with strictly positive aspiration decrease their aspiration by $\epsilon,$ and since the aspirations are on the $\epsilon$-grid, the aspiration will either be greater than or equal to zero. Furthermore, if there exist copies of a pair of nodes that are already matched, no further copies for the same nodes can match, and each copy is an endpoint of at most one edge in the match, hence, the $B$-copies matching and feasibility conditions are preserved.  
\end{proof}

\begin{thm}
    The $B$-Matching Proposals algorithm converges to the copies-core solution and hence, by reduction, to the nodes-core solution with probability one.
\end{thm}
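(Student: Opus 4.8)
The plan is to view the $B$-Matching Proposals dynamics as a Markov chain on a finite state space and to establish that every copies-core solution is an absorbing state and that from every feasible state a copies-core solution is reachable along a finite sequence of positive-probability transitions; standard finite-Markov-chain theory then forces absorption into the set of copies-core solutions with probability one, and Proposition~1 together with the reduction map $r(\cdot)$ transfers this to nodes-core solutions. First I would fix a feasible initial state and invoke Proposition~2: every reachable state is feasible. Since all aspirations stay on the $\epsilon$-grid and are bounded below by $0$ and above by $\max(W_{\max},\max_{g,k}a^0_{g,k})$ with $W_{\max}=\max_{(u,v)\in E}W(u,v)$, and since the $B$-copies matchings form a finite set, the set $\mathcal S$ of reachable states is finite. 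Each iteration draws a proposer $p$ uniformly from $G$ and a receiver $r$ uniformly from the opposite class and then updates deterministically, so for every ordered pair $(p,r)$ in opposite classes that draw has probability at least a fixed $c>0$; hence every finite sequence of ``$(p,r)$-moves'' obtained by following the deterministic update rule has positive probability, and the problem reduces to combinatorial reachability.

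Next I would show that a copies-core solution $(\{\mathbf a_g\}_{g\in G},\mathcal M)$ is absorbing. If the activated proposer $p$, say $p=u$, picks $r=v$ with $(u,v)\in M$, the proposal is skipped. Otherwise stability against pairwise deviation gives $a_{u,i}+a_{v,j}\ge W(u,v)$ for all copies, in particular for the minimizing copies selected by the algorithm, so $\underline a_p+\underline a_r\ge W(u,v)$ and the proposal fails; and zero-gain for unmatched copies forces every free copy of $u$ to have aspiration $0$, so $\mathcal F_+(p)=\emptyset$ and the state is unchanged. Thus copies-core solutions are absorbing; the converse, that a non-core feasible state is not absorbing, will drop out of the reachability lemma below, since such a state will admit a positive-probability transition to a (genuinely different) core state.

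The heart of the argument is the reachability lemma: from any feasible state there is a finite positive-probability sequence of moves ending at a copies-core solution. Because edge saturation is preserved by the dynamics (Proposition~2), the only conditions still to be enforced are stability and zero-gain. I would construct the path in stages, mirroring the Paths Transfers algorithm but realized through single pairwise proposals: a \emph{deflation} stage in which well-chosen failing proposals lower the aspirations of free copies and of over-satisfied matched copies, using the capacity-forced break-and-rejoin mechanism to release aspiration mass off saturated edges; a \emph{saturation} stage in which each blocking pair is resolved by having the relevant agent propose and create a saturated matched edge, possibly re-routing existing matches, in the spirit of the augmenting-path, copies-exchange, and aspiration-transfer cases; and a final \emph{cleanup} stage of failing proposals that drives every remaining free copy's aspiration to $0$. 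One then checks that each stage terminates---using that aspirations lie on the $\epsilon$-grid and are bounded below, so that an appropriate potential (for instance the total aspiration on the side being deflated, or the number of free copies with positive aspiration) strictly decreases---and that a later stage does not undo a core condition already established.

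Granting these, finiteness of $\mathcal S$, nonemptiness of the set of copies-core solutions (also guaranteed independently by Theorem~1), their being absorbing, and their reachability from every state yield uniform constants $\delta>0$ and $T$ such that from any state the chain enters the core within $T$ steps with probability at least $\delta$; hence the probability of not being absorbed by step $kT$ is at most $(1-\delta)^k\to 0$, so the chain is absorbed into the set of copies-core solutions with probability one, and by Proposition~1 the reduced state is a nodes-core solution. I expect the main obstacle to be the staged construction in the reachability lemma: the available moves are very constrained (a failed proposal only decrements the minimum positive free copy, while a successful one simultaneously overwrites an aspiration and forces a match break), successful proposals can raise the total aspiration and spawn new blocking pairs, and the matching size is not monotone, so guaranteeing that the construction both makes progress and terminates requires exactly the careful choice of potential and ordering of operations sketched above.
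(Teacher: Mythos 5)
Your overall architecture is exactly the paper's: characterize the copies-core solutions as the absorbing states, prove that from any feasible state a copies-core solution is reachable by a finite positive-probability sequence of proposals, and conclude by a geometric/Borel--Cantelli absorption argument, transferring to nodes-core solutions via the reduction map and Proposition~1. Your treatment of absorption (core $\Rightarrow$ absorbing) and of the probabilistic wrap-up is correct and matches the paper, which devotes its Part~III and Part~II to precisely these points.

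The gap is in the reachability lemma, which is the technical heart of the paper (its Propositions~3 and~4) and which you leave as a staged sketch whose literal ordering would fail. Your plan is deflation, then saturation, then a ``final cleanup stage of failing proposals that drives every remaining free copy's aspiration to $0$.'' But a failed proposal decrements a free copy $u_i$ with positive aspiration, and if that copy lies on an equality edge --- i.e.\ $a_{u,i}+a_{v,j}=W(u,v)$ for some unmatched $(u,v)$ --- the decrement immediately creates a blocking pair, destroying the stability condition you established in the saturation stage; resolving that blocking pair requires a successful proposal, which rewrites an aspiration, breaks and forms matches, and can re-populate $\mathcal F_+$. So cleanup cannot be deferred to the end. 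The paper's construction instead interleaves: a free positive copy is decremented only while it has no outgoing equality edge, and the moment an equality edge appears the sequence switches to proposal gadgets emulating the augmenting-path, copies-exchange, or aspiration-transfer cases of the Paths Transfers algorithm, with $|\mathcal F_+|$ (together with the total aspiration of one side, bounded below on the $\epsilon$-grid) serving as the potential that guarantees each subsequence terminates with $|\mathcal F_+|$ strictly decreased while edge saturation and stability are preserved throughout. You name these cases ``in the spirit of'' your middle stage and candidly flag the termination/ordering issue as the main obstacle, but that obstacle is the proof: without the interleaved case analysis and the explicit proposal gadgets realizing each case through single pairwise moves, the reachability claim --- and hence the theorem --- is not established.
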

\begin{proof}
        The proof follows three steps illustrated in the following subsections. First, starting from any feasible state, there exists a sequence of proposals that reaches a copies-core solution within some finite time $T$. Second, the probability of the occurrence of such sequence goes to one. Third, a state is absorbing if and only if it is a copies-core solution.
\end{proof}

\subsection*{I. There exists a sequence of proposals that reaches a copies-core solution from any feasible state}
\begin{proposition}
    Starting from any feasible state, there exists a finite sequence of proposals resulting in a state satisfying the \textbf{edge saturation} and \textbf{stability against pairwise deviation} conditions. 
\end{proposition}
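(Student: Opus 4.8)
The plan is to design an explicit finite sequence of proposals that drives an arbitrary feasible state into one satisfying the first two copies-core conditions. The natural target is to first raise aspirations high enough to guarantee stability against pairwise deviation, then restore edge saturation by a sequence of successful proposals. Concretely, I would proceed in two phases. In Phase~1, I would process the nodes of one class, say $V$, one at a time, and for each $v\in V$ have $v$ repeatedly propose; whenever a proposal from $v$ fails it decreases an aspiration, and whenever a proposal succeeds $v$'s copy is set to $W(u,v)-\underline a_r$. The point is that I can instead engineer proposals \emph{to} $v$ so that the copies of $v$ get pushed up to at least $\max_{u\in U}W(u,v)$; more carefully, I would use proposals to make every copy $v_j$ acquire aspiration $a_{v,j}\ge \max_{u} W(u,v)$, mirroring the over-aspiration step of the Paths Transfers algorithm. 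Since $W$ takes finitely many rational values on an $\epsilon$-grid, only finitely many proposals are needed to reach such a configuration.

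After Phase~1, every edge $(u,v)$ has $a_{u,i}+a_{v,j}\ge W(u,v)$ for all copies (because one endpoint already has aspiration $\ge W(u,v)$), so the \textbf{stability against pairwise deviation} condition holds regardless of the matching. The remaining task in Phase~2 is to arrange a matching $\mathcal M$ and adjust aspirations so that \textbf{edge saturation} holds while not destroying stability. Here I would exploit the mechanics of a successful proposal: when $p$ proposes to $r$ and the match forms, $p$'s copy is reset to exactly $W(u,v)-\underline a_r$, which makes that edge saturated by construction, and crucially $p$'s aspiration only \emph{increases} (since the proposal succeeded, $\underline a_p+\underline a_r+\epsilon\le W(u,v)$, so the new value $W(u,v)-\underline a_r\ge \underline a_p+\epsilon>\underline a_p$), so stability is preserved on $p$'s side; and on $r$'s side $r$'s aspiration is unchanged. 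Hence I can build up a saturated matching edge by edge: pick any edge I want in $\mathcal M$, have the appropriate agent propose, and the match forms as long as the relevant inequality holds — which after Phase~1's over-aspiration it does, at least for edges incident to nodes whose copies were inflated. Care is needed because forming a new match may force an agent at capacity to break an old match; the freed copy then carries a stale aspiration, but that copy being unmatched does not violate either of the two conditions we are maintaining, and its aspiration can be left for later phases.

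The main obstacle is Phase~2: ensuring that the sequence of successful proposals actually terminates in a state where \emph{every} matched edge is saturated and no "accidental" edge created during the process has $a_{u,i}+a_{v,j}>W(u,v)$ for a matched pair — i.e., controlling which copy of $r$ the algorithm picks (it picks the minimum-aspiration free copy, or the minimum matched copy) so that the edge that forms is the one I intend, with the intended saturation value. I would handle this by first driving the matching to the empty matching (break all matches via a run of failing proposals that lower aspirations, or simply note feasibility already allows starting fresh), then adding edges in an order that keeps previously-added edges intact: since adding edge $(u,v)$ only breaks matches at $u$ or $v$ when they are at capacity, processing the nodes so that each node reaches its final degree monotonically avoids disturbing saturated edges. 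A clean way to organize this is to fix in advance a maximum $B$-matching-like target structure (any $B$-copies matching will do for these two conditions, maximality is not needed here) and realize it greedily. Throughout, termination is guaranteed because aspirations live on the $\epsilon$-grid and are bounded (above by $\max W$, below by $0$), so each monotone phase is finite, and there are finitely many phases.
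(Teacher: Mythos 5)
Your construction has a genuine gap in Phase~1: the over-aspiration step of the Paths Transfers algorithm is a centralized operation that simply \emph{sets} aspirations, and it cannot be emulated by the proposal dynamics. A proposal \emph{to} $v$ never changes $v$'s aspirations at all --- only the proposer's copy is updated, and only to the specific value $W(u,v)-\underline{a}_r$ upon success (which simultaneously consumes a copy of each endpoint in a match, and at most one edge may join copies of a given pair $(u,v)$). A successful proposal \emph{from} $v$ sets one copy of $v$ to $W(u,v)-\underline{a}_u$, which equals $\max_{u}W(u,v)$ only in the special case where $u$ attains the maximum and $\underline{a}_u=0$. So there is in general no sequence of proposals that pushes every copy $v_j$ up to $\max_u W(u,v)$, and the claim that ``only finitely many proposals are needed to reach such a configuration'' asserts reachability of a state that is typically unreachable. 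A second, smaller error is the suggestion to ``break all matches via a run of failing proposals'': in these dynamics a failed proposal never alters the matching; it only decrements the aspiration of a free positive-aspiration copy of the proposer. Matches are broken only as a side effect of a \emph{successful} proposal at capacity.

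The irony is that your Phase~2 observation already contains the correct (and much shorter) argument, which is essentially the paper's: whenever some unmatched pair admits a successful proposal, make it; the new edge is saturated by construction, the proposer's relevant aspiration strictly increases on the $\epsilon$-grid, the receiver's is unchanged, and freed copies retain their values, so aspirations are monotonically non-decreasing and bounded, forcing the chain of successful proposals to terminate after finitely many steps. At termination no proposal can succeed, which is precisely \textbf{stability against pairwise deviation} (for the copies the dynamics compare), and \textbf{edge saturation} holds throughout because feasibility is preserved (Proposition~2). No over-aspiration phase, no emptying of the matching, and no prescribed target structure are needed; dropping Phase~1 and running your Phase~2 monotonicity argument directly from the given feasible state is the repair.
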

\begin{proof}
    Each node can propose with a positive probability to another node of the opposite class that it is not currently matched with. Consider the sequences of proposals, where nodes (from one class) propose to form an edge only whenever there is a possible gain in the proposing agent's aspiration. Such sequences will have a finite number of proposals due to the boundedness of the number of edges, and hence, will occur with positive probability bounded away from zero within a time horizon equal to the number of edges, $|E|$. Any of these sequences of proposals will result in a state satisfying \textbf{edge saturation}, due to the states' feasibility, and \textbf{stability against pairwise deviation} conditions, since these sequences only stop when there are no more proposals that can result in aspiration gain.
\end{proof}

\begin{proposition}
    Starting from any feasible state satisfying the \textbf{edge saturation} and \textbf{stability against pairwise deviation} conditions, there exists a finite sequence of proposals resulting in a \textbf{copies-core} solution. 
\end{proposition}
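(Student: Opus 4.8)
The plan is to induct on $|\mathcal{F}_+|$, the number of free copies carrying strictly positive aspiration. Since $\mathcal{F}_+=\emptyset$ is exactly the \textbf{zero-gain for unmatched copies} condition, it suffices to show that from any feasible state satisfying \textbf{edge saturation} and \textbf{stability against pairwise deviation} with $\mathcal{F}_+\neq\emptyset$, a finite proposal sequence reaches another such state with strictly smaller $|\mathcal{F}_+|$; concatenating these finite segments and stopping once $\mathcal{F}_+=\emptyset$ produces a copies-core solution. The roadmap for one reduction is a single outer-loop iteration of the Paths Transfers algorithm of Section~III, which, applied to the current state, preserves the first two conditions while strictly decreasing $|\mathcal{F}_+|$; it remains to reproduce the combined effect of its inner-loop moves using only proposals.

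The key structural fact is that while \textbf{stability} holds, every proposal is either skipped (the two chosen nodes are already matched to each other) or fails, because for a non-matched pair $(u,v)$ \textbf{stability} gives $\underline a_u+\underline a_v\ge W(u,v)$, precisely the condition under which a proposal fails. Thus the part of a Paths Transfers iteration that only shrinks non-tight free copies is reproduced directly: the relevant node proposes to a non-matched neighbour — one exists since the graph is complete bipartite and, by Assumption~1, a node holding a free copy is matched to strictly fewer partners than there are nodes in the opposite class — the proposal fails, and the node lowers its minimal free positive copy by $\epsilon$, eventually to zero. The augmenting-path, copies-exchange, and aspiration-transfer moves, however, all involve successful proposals and so cannot be reproduced without temporarily leaving the region where \textbf{stability} holds. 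The device is: one failed proposal that lowers a \emph{tight} free copy by $\epsilon$ opens strict slack on a non-matched edge; this enables a chain of successful proposals that rotates the matching along an equality-graph path — each successful proposal lets a copy claim an edge weight and, if that copy was at capacity, dissolves its old match and frees the next copy on the path — after which the freed copies are shrunk by further failed proposals and re-grabbed by the opposite nodes' own proposals, restoring the matching with aspiration shifted onto the opposite side. This is exactly the net effect of the Paths Transfers transfers and augmenting steps, and once the formerly tight free positive copies are left with slack they are sent to zero as above, so $|\mathcal{F}_+|$ has dropped.

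The main obstacle is turning this cascade-and-reform procedure into a precise construction and proving that it nets the intended drop in $|\mathcal{F}_+|$ rather than merely relocating free positive copies — or worse, cycling, since a careless ordering of proposals genuinely returns to the stuck configuration. One must argue that (i) preliminary failed-proposal decreases can always make the target copy the minimal one, so that the intended copies (not some other minimal copies of the same nodes) are matched and dissolved at each step; (ii) $|\mathcal{F}_+|$ never rises during the procedure; and (iii) the procedure halts in a state again satisfying \textbf{edge saturation} and \textbf{stability} with $|\mathcal{F}_+|$ strictly smaller. Establishing (i)--(iii) amounts to exhibiting a faithful distributed simulation of one Paths Transfers outer iteration, and this is where essentially all the work of the proof resides; with it in hand, induction on $|\mathcal{F}_+|$ closes the argument.
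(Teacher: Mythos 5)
Your roadmap coincides with the paper's: induct on $|\mathcal{F}_+|$, and realize one outer-loop iteration of the Paths Transfers algorithm as a finite proposal sequence. Your key observations are all correct and match the paper's — under \textbf{stability} every non-skipped proposal fails (so the \textbf{decreasing aspiration} case is simulated directly by repeated failed proposals to a non-partner, which exists by Assumption~1), and the successful-proposal cases are unlocked by first spending one failed proposal to create $\epsilon$ of slack on a tight equality edge. That "fail once to open slack, then succeed" device is exactly the paper's mechanism.

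However, you stop precisely where the proof has to do its work, and you say so yourself: items (i)--(iii) are stated as obligations rather than discharged, so as written this is a proof plan, not a proof. The paper fills this hole by exhibiting the sequences concretely. For the \textbf{augmenting path} and \textbf{copies exchange} cases, each node along the equality-graph path proposes to its successor \emph{twice} — the first proposal fails and drops the tail copy's aspiration by $\epsilon$, the second succeeds and rotates one edge of the matching — with the path's orientation guaranteeing that the copy being dissolved at each step is the intended one, and the terminal copy (free, or at zero aspiration) absorbing the net change so that $|\mathcal{F}_+|$ drops by one. For the \textbf{aspiration transfer} case the paper specifies a \emph{U proposal round} (the equality-neighbors of $v^*$ each successfully re-grab their edge at an aspiration raised by $\epsilon$) followed by a \emph{matches restoration round} (the $V$-nodes whose matches were broken propose back, failing first and then restoring), iterated over the sets $\mathcal{V}^{(n)}$; this is what guarantees your concern (ii) — that $|\mathcal{F}_+|$ does not rise and the state returns to one satisfying \textbf{edge saturation} and \textbf{stability} — and that repetition of the transfer eventually produces a new equality edge or an augmenting/exchange path, preventing the cycling you worry about. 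Without some such explicit construction and the accompanying bookkeeping of which copy is minimal at each step, the proposition is not yet established.
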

\begin{proof}
    The proof follows from the fact that there exists a sequence of proposals that occurs with positive probability, within a finite time horizon that is polynomial to the input parameters. Such possible sequence corresponds to the implementation of step $3$ in the paths transfers algorithm. This sequence consists of subsequences that are analogous to performing the outer loop of step $3,$ and lead to $|\mathcal{F}_+|$ strictly decreasing after each subsequence. At the beginning of each subsequence, similar to the paths transfer algorithm, the equality graph, and the cases that the state can be in, guide the possible subsequence of proposals that leads to a decrease in $|\mathcal{F}_+|$.
    
    The first part of each subsequence is analogous to the \textbf{decreasing aspiration case} as follows. For each node with free copies that have aspirations greater than zero, at least one of these copies should be part of the equality graph. If that is not the case for one node, then the proposal sequence in this part is as follows. Without loss of generality assume such node belongs to $U,$ and denote it by $u,$ then consider $v\in V$ such that $(u,v)\notin M,$ then let $u$ propose to $v$ repeatedly until either $\exists v'\in V$ such that $(u,v')\notin M$ and $\min_{i\in \mathcal{F}_{+}(u)}{a_{u,{i}}}+\min_{j\in [B(v')]}{a_{v',{j}}}=W(u,v')$ or a copy of $u$ hits zero aspiration from the failed proposals leading to a strict decrease in $|\mathcal{F}_+|.$ Recalculate the equality graph, and repeat for any other node with the same violation, exchanging the $u$ and $U$ instances with $v$ and $V$ if such node belongs to $V$ instead of $U.$ If $|\mathcal{F}_+|$ decreases after the first part, then the subsequence ends then. Otherwise, the subsequence has a second part that corresponds to one of the following three cases.

    \begin{itemize}

        \item  If there exists a directed path on the equality graph from a free copy from one class to a copy of the same class with zero aspiration, then a set of proposals analogous to the \textbf{copies exchange case} is followed. The proposals are such each node on this path proposes to the subsequent node on the path twice, starting from the free copy tailing this path and excluding the proposal from the second to last node to the last node. The subsequence ends after this set of proposals since it leads to $|\mathcal{F}_+|$ strictly decreasing. 

        \item If such directed path does not exist but an augmenting path exists from a free copy with positive aspirations, then the case becomes analogous to the \textbf{augmenting path case}. A similar set of proposals is followed where each node on this path proposes to the subsequent node on the path twice, starting from the free copy tailing this path and including the second to last node proposing to the last node. The subsequence ends after this set of proposals since it leads to $|\mathcal{F}_+|$ strictly decreasing. 

        \item If none of these paths exist, then a set of proposals analogous to the operation in the \textbf{aspiration transfer case} is followed. The following set of proposals leads to the total aspirations of one class of nodes decreasing and the total aspirations of the other class increasing. One set of such proposals does not guarantee a decrease in $|\mathcal{F}_+|,$ yet repeating the set with the same class of nodes as the starting point would eventually either lead to $|\mathcal{F}_+|$ decreasing and the subsequence ends then or to a directed path on the updated equality graph to appear following one of the cases that were mentioned earlier. If such directed path appears, the corresponding set of proposals can be followed to decrease $|\mathcal{F}_+|$ and end the subsequence. The aforementioned set of proposals that is analogous to the operation in the {aspiration transfer case} proceeds as follows. 

\begin{itemize}
    \item  This set of proposals is guided by choosing a node with a free copy of positive aspiration from one class of nodes. Without loss of generality, consider such node a $V$ node denoted by $v^*,$ and $\mathcal{U}^{(n)}$ and $\mathcal{V}^{(n)}$ are constructed in the same manner as in the Paths Transfers algorithm.

    \item   The first proposal in this set is the proposal of $v^*$ to a node $u$ such that $(u,v^*)\notin M$ and $\min_{i\in [B(u)]}{a_{u,{i}}}+\min_{j\in \mathcal{F}_{+}(v^*)}{a_{v^*,{j}}}=W(u,v^*).$ The proposal fails because of the satisfaction of the stability against pairwise condition. Hence, the aspiration of a free copy, with minimum aspiration, of $v^*$ decreases. Now let $N_{v^*}$ be the node neighbors of $v^*$ through the equality graph, i.e. all the nodes that have copies as heads for edges that have a copy of $v^*$ as the tail. Call the following set of proposals a \textbf{U proposal round}. Let each node in $u\in N_{v^*}$ propose to $v^*$ in order. The proposals should succeed, old matches get broken, and the aspiration for a copy of each proposing node increases. After this round of proposals each $v$ that had a match broken due to these proposals tries to restore that match, denote this round by the \textbf{matches restoration round}. In this round, nodes in $V$ try to restore broken matches by proposing back to their old partners. The proposals initially fail and the total aspiration of the proposing nodes decreases. Another two proposals are made in an attempt to restore these matches, whenever possible by either the node from $U$ or $V,$ whichever node has free copies. The U proposal and match restoration rounds are repeated as needed till $\forall (u,v^*)\notin M,$ $u\in N_{v^*},$  $\min_{i\in [B(u)]}{a_{u,{i}}}+\min_{j\in \mathcal{F}_{+}(v^*)}{a_{v^*,{j}}}=W(u,v^*)$ and $|\mathcal{F}_+|$ either stay the same or decreases compared to its value at the beginning of the subsequence.  
    \item Repeat the previous U proposal and match restoration rounds but for each of $ v\in \mathcal{V}^{(n)}, \forall n>0 $ instead of $v^*.$ If the considered $v$ had a match before beginning the rounds that are pertaining to it, then after finishing these rounds, the $U$ node that was connected to $v$ proposes twice to it to restore that old match after the end of these rounds.
\end{itemize}
    \end{itemize}
  
\end{proof}

\subsection*{II Borel-Cantelli lemma}

The sequences in propositions $3$ and $4$ occur with a positive probability, bounded away from zero, within some finite time $T$ that does not depend on the state. Hence, by the Borel-Cantelli lemma, the probability of the occurrence of a sequence of proposals that lead to the copies-core solution goes to one as time goes to infinity.

\subsection*{III. A state is absorbing in the $B$-Matching Proposals dynamics if and only if it is a copies-core solution}

     \subsubsection*{\textbf{If a state is absorbing then it is a copies-core solution }}
       Assume the state $(\{\mathbf{a}_g\}_{g\in G},\mathcal{M})$ is absorbing. Let $(\mathbf{x},M)=r(\{\mathbf{a}_g\}_{g\in G},\mathcal{M}).$ For a state to be absorbing then,   

     \begin{itemize}
        \item the \textbf{edge saturation} property is satisfied since it is satisfied for all feasible states. 
        
         \item $\mathcal{F}_+=\emptyset$ and the $\textbf{zero-gain for unmatched copies}$ condition is satisfied. Otherwise, $\exists g \in G $ such that $\mathcal{F}_+(g)\neq \emptyset$ and due to Assumption $1,$ there exists a node in the opposite class that is not connected to $g$ through any copy. If $g$ proposes to match with such node, the proposal must fail since the state is absorbing and the matching structure cannot change, yet that leads to one copy from $\mathcal{F}_+(g)$ decreasing its aspiration, which changes the state, and hence provides a contradiction.  
         
         \item Given $ \mathcal{F}_+= \emptyset$, then $\forall (u,v)\notin M,\ i\in [B(u)],$ $j\in [B(v)]$  $a_{u_i}+a_{v_j}\geq W(u,v)$ and the $\textbf{stability against pairwise deviation}$ is satisfied. Otherwise, $\exists u,v,i$ and $j$ such that $a_{u_i}+a_{v_j} < W(u,v).$ Thus, if $u$ proposes to $v$ or vice versa, the proposal must succeed, consequently changing the matching and the aspirations, which establishes that the state is not absorbing and provides a contradiction.
     \end{itemize}
    Consequently, $(\{\mathbf{a}_g\}_{g\in G},\mathcal{M})$ is a copies-core solution.\\

     \subsubsection*{\textbf{If a state is a copies-core solution then it is absorbing}}
     Assume that the state $(\{\mathbf{a}_g\}_{g\in G},\mathcal{M})$ is a copies-core solution at iteration $h$. Assume the proposal happening at iteration $h+1,$ is from some node $u\in U$ to any node $v\in V.$ 

     \begin{itemize}
          \item If $u$ and $v$ were already connected, no change occurs.
         \item If $u$ and $v$ are not connected, then by the \textbf{stability against pairwise deviation} property, $\forall i\in B[(u)]$ and $j\in [B(v)],$ $a_{u_i}+a_{v_j}\geq W(u,v).$ Hence, all proposals must fail and no change can occur in the matching. In addition, from the $\textbf{zero-gain for unmatched copies}$ condition,  $\mathcal{F}_+= \emptyset$, then none of the copies can decrease its aspiration.

     \end{itemize}
    Similar justification applies if a node in $V$ proposes to any node in $U.$ Hence, $(\{\mathbf{a}_g\}_{g\in G},\mathcal{M})$ is absorbing.

\section{ILLUSTRATION} In this section, we demonstrate the distributed learning dynamics on the aforementioned multi-agent task assignment setting as follows.
\begin{itemize}
    \item We consider a set of tasks, $T,$ where \begin{itemize}
        \item each task has a value $V(t),$ such that $V:T\rightarrow \mathbb{Z}^+,$ and 
        \item each task can afford to utilize up to $B(t)\in \mathbb{Z}^+$ robots.
        
    \end{itemize}
       \item We consider a set of robots, $R,$ where \begin{itemize}
        \item each robot has accuracy $I(r),$ such that $I:R\rightarrow \mathbb{Q}^+,$ and
        \item each robot can perform up to $B(r)\in \mathbb{Z}^+$ tasks.
    \end{itemize}   

     \item[] In our setup, we are assuming a negative correlation with the sophistication of the robot $I(r)$ and how many tasks it can perform $B(r).$
   
    \item The value of a matching of task $t$ with robot $r$ is $W(t,r),$ where $W(t,r)$ is proportional to the accuracy of the robot multiplied by the value of the task.
\end{itemize}
We applied the $B$-Matching Proposals dynamics and examined the evolution of the matching and copies' aspirations.

\subsection{Sample run and average performance}
    Fig. 1 illustrates a sample run of the dynamics on the task assignment setting. The left side of Fig. 1 shows the final matching between the robot nodes, with each node $r$ expanded as $B(r)$ copies, and the task nodes, with each node $t$ expanded as $B(t)$ copies. The right side of Fig. 1 is a plot of the total feasible aspirations, aspirations of the matched copies, at each iteration. The dashed line represents the maximum $B$-matching value solved by a linear program.
 
        To observe the average performance of the dynamics on the preceding setting, we ran the dynamics for $100$ different configurations, setting $|T|=10,\ |R|=5,\ \epsilon=1,$ and randomizing the $V$ and $B$-values of the tasks. Fig. 2 shows the plot of the average of the relative total feasible aspirations, where, for each run, the relative total feasible aspiration is the total feasible aspirations divided by the optimal value for that run.
          \begin{figure}[h]
    \begin{center}
        \includegraphics[scale=0.3]{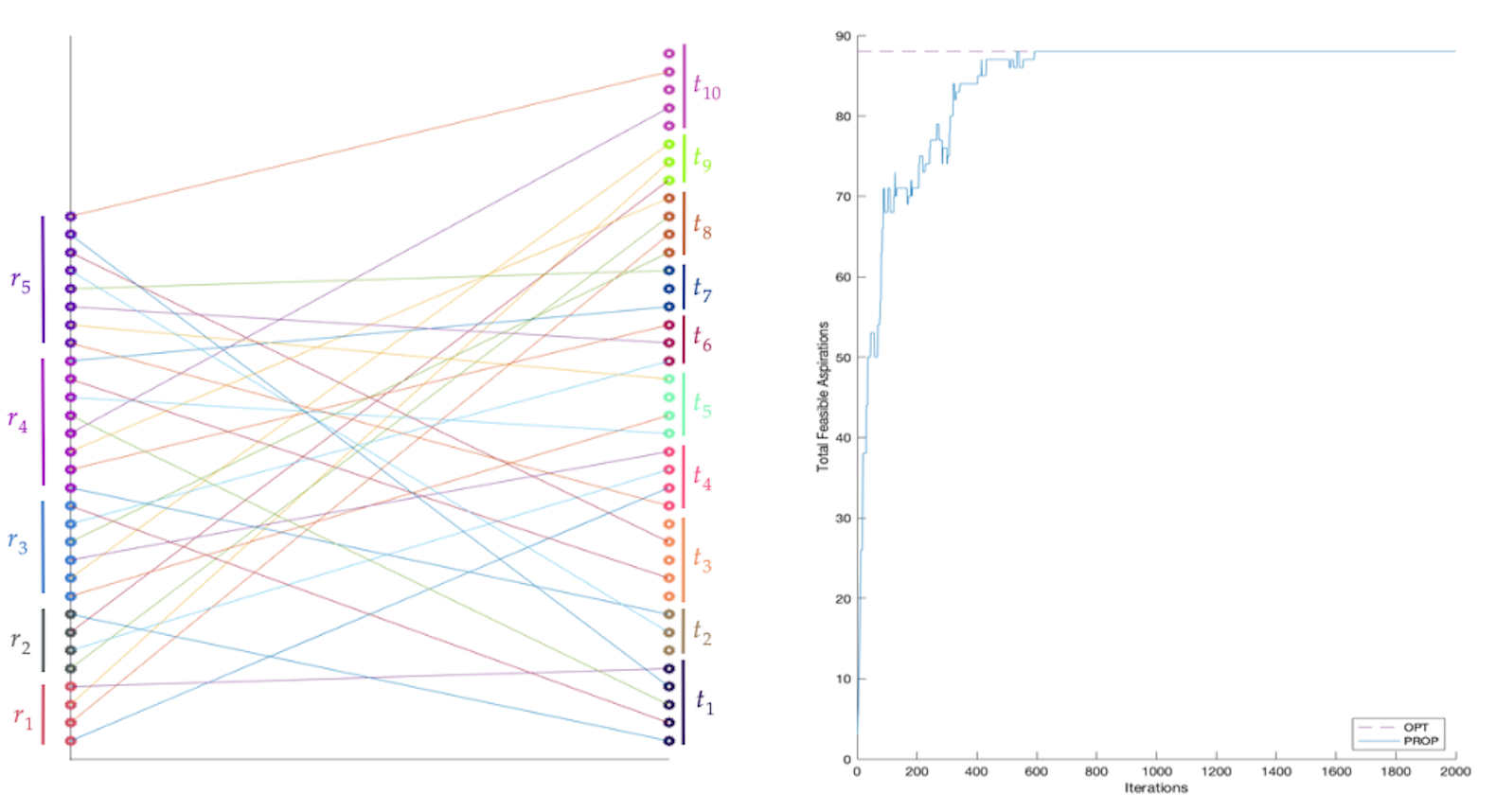}
    \caption{Sample run of the multi-robot task assignment setting. On the right, the optimal value `OPT' is plotted in purple and the total feasible aspirations of the agents, as the $B$-Matching Proposals dynamics evolve, `PROP' is plotted in blue. On the left, an illustration of the final $B$-copies matching between the robots and the tasks is shown.}
    \label{fig:enter-label}
  
    \end{center}
    \end{figure}
\begin{figure}[h]
    \begin{center}
    
    \includegraphics[scale=0.15]{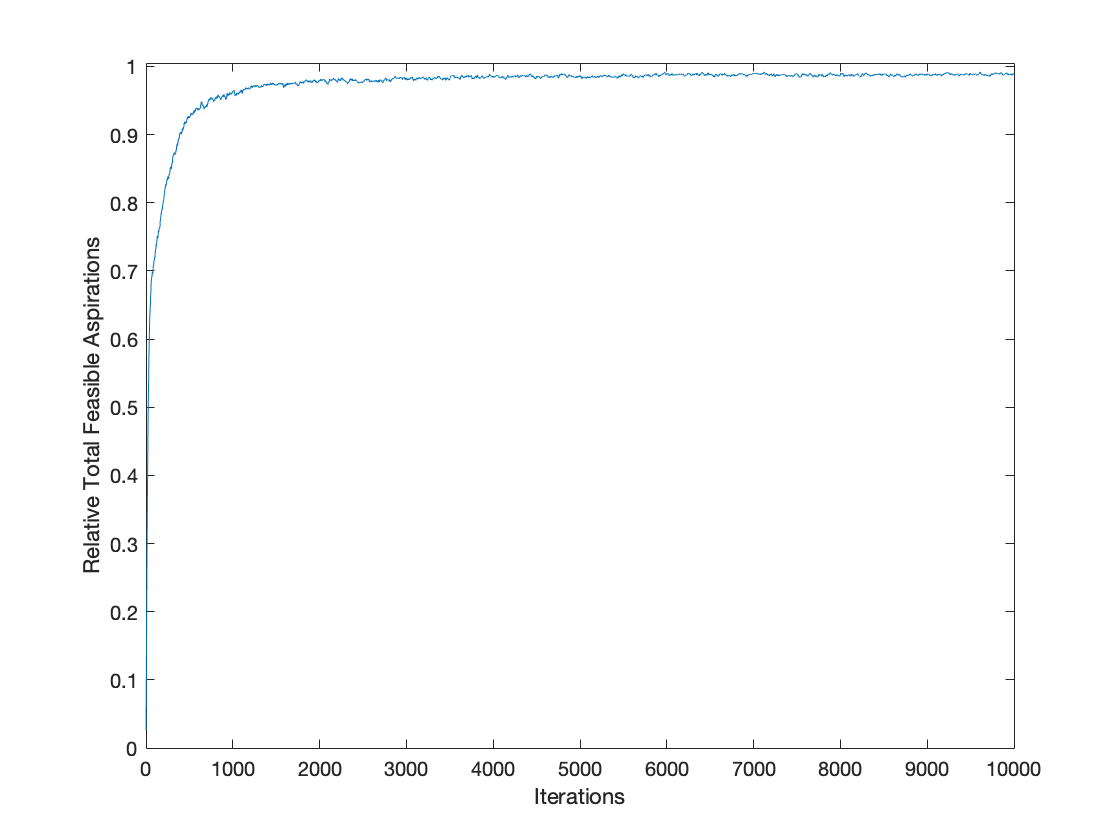}
    \caption{The average of the relative total feasible aspirations achieved by the $B$-Matching Proposals dynamics across $100$ runs.}
    \label{fig:enter-label}
    \end{center}
    
\end{figure}

\subsection{Parameters effects}
We ran supplementary simulations for general $B$-matching problems to help us gain insights into the effects of some of the problem parameters on the average performance of the dynamics, specifically the $B$-values, number of nodes, and discretization size. Figures 3, 4, and 5 illustrate the observed trends resulting from the parameters variations. 

First, to isolate the effect of varying the $B$-values, we generated $50$ random $B$-matching instances; we set the number of nodes per each class to be $9,$ yet varied the range of $B$-values. For each instance, we ran the dynamics for $3$ different $B$-value functions, $B_1,B_2,$ and $B_3$. $B_1$ value is randomly generated such that $B_1(p)\in \{1,2,3\}\ \forall p \in G,\ B_2$ is such that $B_2(p)=2B_1(p)\ \forall p \in G,$ and $B_3$ is such that $B_3(p)=3B_1(p)\  \forall p \in G$. Fig. 3 shows the average of the relative welfare across the $50$ runs while using the different $B$-value functions. Empirically, it appears for this setup that, on average, bigger $B$-values lead to better performance of the dynamics.

\begin{figure}[h]
    \begin{center}
        \includegraphics[scale=0.15]{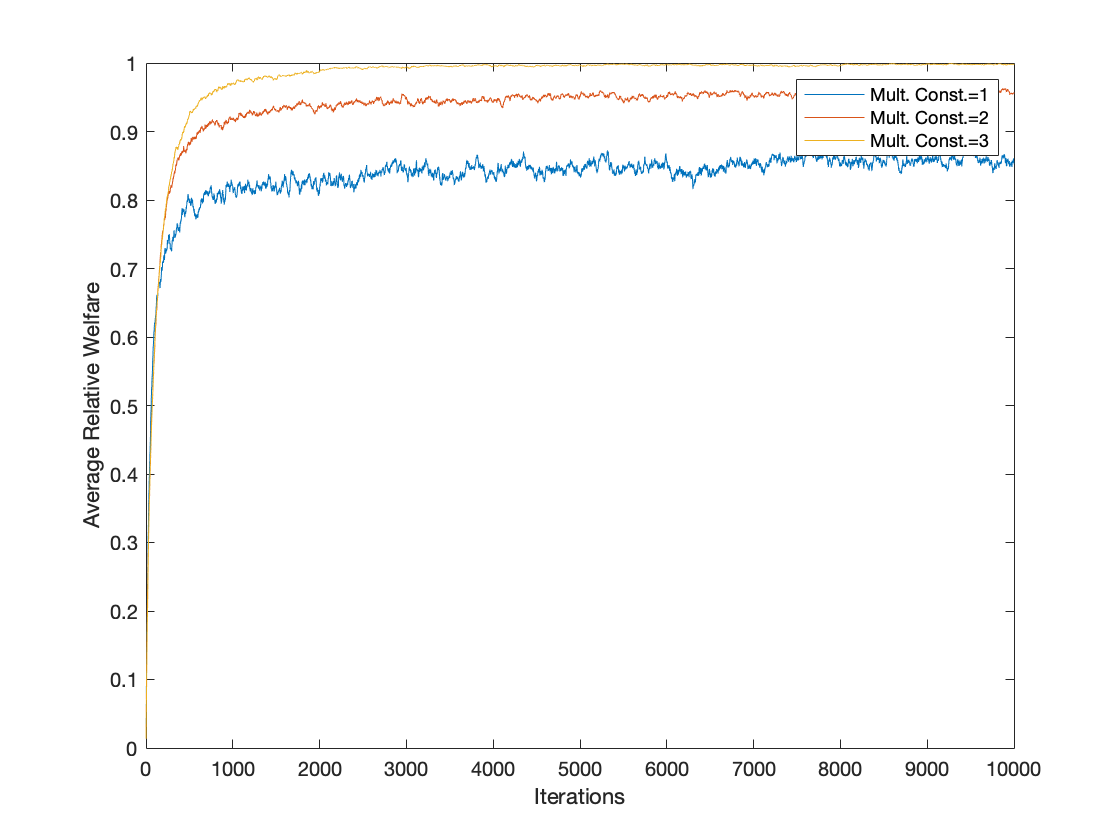}
    \caption{The average relative welfare across $50$ configurations when using $B_1, B_2,$ and $B_3$ as the $B$-value functions. The runs that use $B_1$ are represented by the blue line, those using $B_2$ are represented by the red line, and those using $B_3$ are represented by the yellow line.}
   
    \end{center}

\end{figure}

Second, we generated $50$ random $B$-matching instances, and for each instance, we ran the dynamics for $4$ different discretization values. Fig. $4$ shows the average of the relative welfare across $50$ runs while using the different discretization values. From the simulations, we observed that the smaller the discretization value is, the longer the dynamics took to converge and get closer to the welfare value.

\begin{figure}[h]
    \begin{center}

     \includegraphics[scale=0.15]{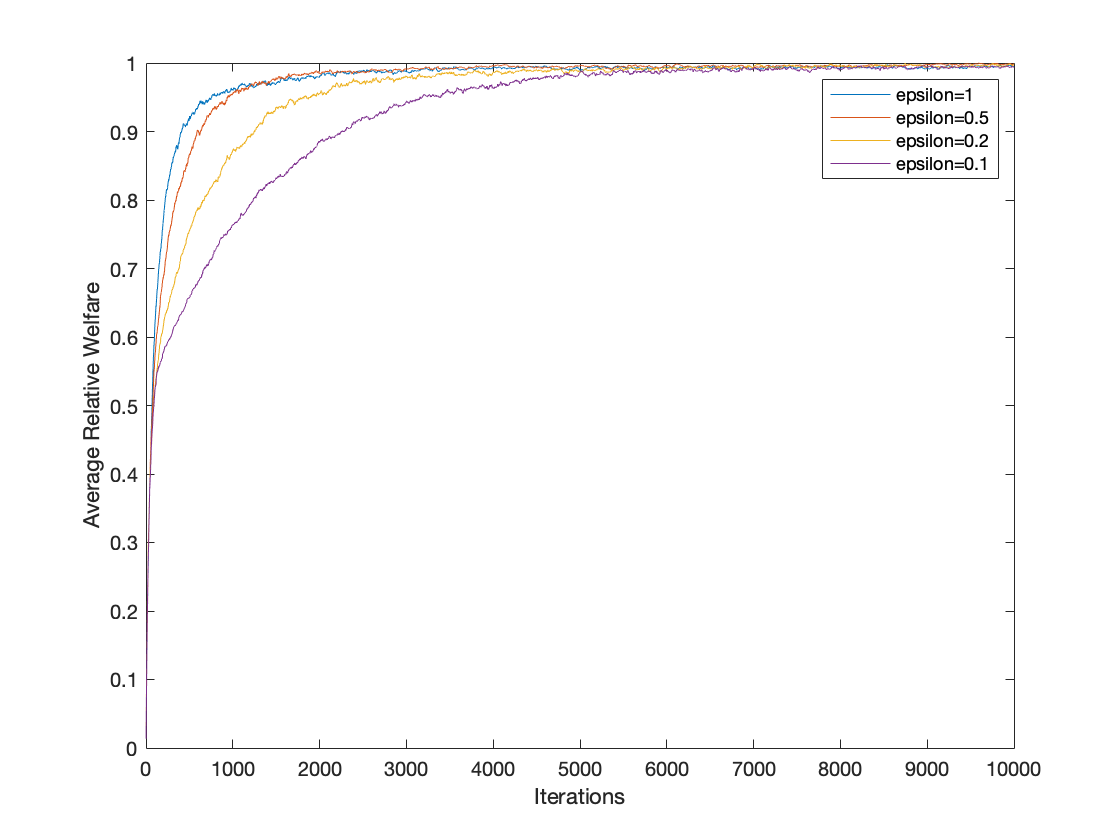}
    \caption{The average relative welfare across $50$ configurations when using different values for the discretization width. } 

    \end{center}

\end{figure}

Finally, we varied the number of nodes for $50$ random $B$-matching instances to observe the impact of decreasing the number of nodes on the dynamics performance. For each instance, we removed a number of nodes gradually and their associated edges, reran the dynamics, and compared the empirical performance. Fig. 5 illustrates the different performance levels for the dynamics resulting from varying the number of nodes.

\begin{figure}[h]
    \begin{center}
  
    \includegraphics[scale=0.15]{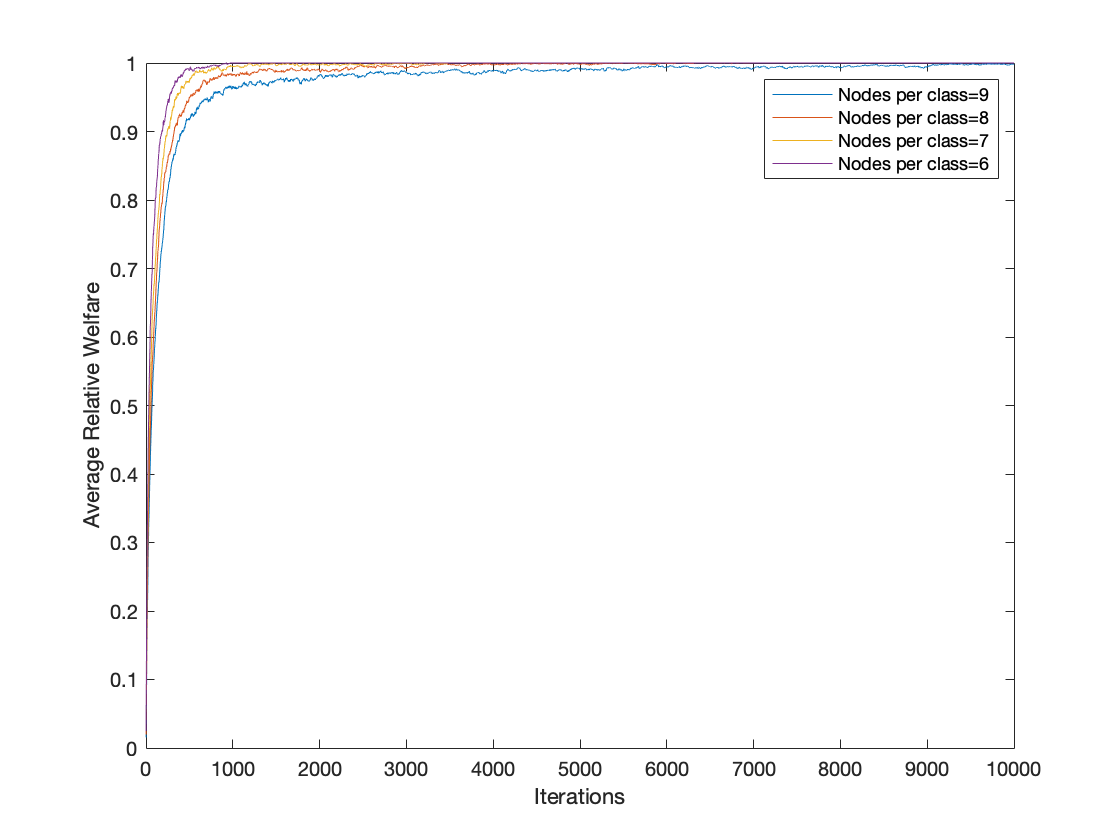}
    \caption{The average relative welfare across 50 configurations, and the dynamics performance as we gradually remove nodes and their associated edges from the graph.}
   
    \end{center}

\end{figure}

\section{CONCLUSION} To recap, we considered the $B$-matching setup. We defined a copies-core solution, which is a solution concept that specifies the gain of each node in a bipartite graph, from each match it joins. The copies-core solution achieves core stability such that no group of agents can deviate and collectively gain a higher payoff through another matching. We presented a centralized algorithm, in the nature of the Hungarian method, that reaches the copies-core solution in polynomial time. We further introduced distributed learning dynamics that converge to the core with probability one. Finally, we illustrated the distributed dynamics on a multi-agent task assignment setting.

\addtolength{\textheight}{-12cm}   





\end{document}